\setlist[enumerate]{label=\textup{(\roman{*})}}
\newcommand*{\Z}{\mathbb{Z}}
\newcommand*{\ret}{\mathcal{R}}
\newcommand*{\lang}{\mathcal{L}}
\newcommand*{\cA}{\mathcal{A}}
\newcommand*{\cB}{\mathcal{B}}
\newcommand*{\from}{\colon}
\newcommand*{\emptyw}{\varepsilon}
\newcommand*{\Ext}{\operatorname{E}}
\newcommand*{\LExt}{\operatorname{E}^-}
\newcommand*{\RExt}{\operatorname{E}^+} 
\newcommand*{\ExtGraph}{\mathcal{E}}
\newcommand*{\RauzyGraph}{\Gamma}
\newcommand*{\card}{\#}
\newcommand*{\gen}[1]{\langle #1\rangle}
\theoremstyle{plain}
\newtheorem{theorem}{Theorem}
\newtheorem{proposition}[theorem]{Proposition}
\newtheorem{lemma}[theorem]{Lemma}
\theoremstyle{definition}
\theoremstyle{remark} 
\newtheorem{example}[theorem]{Example}
\begin{document}

\title{Algebraic characterization of dendricity}

\author[F. Gheeraert]{France Gheeraert}
\author[H. Goulet-Ouellet]{Herman Goulet-Ouellet}
\author[J. Leroy]{Julien Leroy}
\author[P. Stas]{Pierre Stas}

\begin{abstract}
    Dendric shift spaces simultaneously generalize codings of regular interval exchanges and episturmian shift spaces, themselves both generalizations of Sturmian words. One of the key properties enforced by dendricity is the Return Theorem. In this paper, we prove its converse, providing the following natural algebraic perspective on dendricity: \emph{A minimal shift space is dendric if and only if every set of return words is a basis of the free group over the alphabet}.
\end{abstract}

\maketitle

\section{Introduction}
\label{s:intro}

Sturmian words, introduced in \cite{Morse1940}, are binary sequences well-known for having numerous properties and characterizations (see the surveys in~\cite{book/Lothaire2002} and~\cite{book/Fogg2002}). Combinatorially, they are the simplest aperiodic words; dynamically, they encode irrational rotations on the circle; geometrically, they approximate lines of irrational slope; and number-theoretically, they are strongly related to continued fractions.

Among the many generalizations of Sturmian words, two are particularly studied: the episturmian shift spaces \cite{Glen2009} with a combinatorial definition, and the codings of regular interval exchanges \cite{Ferenczi2013} originating in the dynamical approach. Both are examples of the main protagonists of this paper: dendric shift spaces. They are combinatorially defined by restricting the context surrounding finite patterns. 

The study of dendric shift spaces begins in 2015 when a group of five researchers (including the third author of this paper) publish a series of papers in which they consider so-called \emph{tree sets}~\cite{Berthe2015,Berthe2015a,Berthe2015c}, which are precisely the languages of dendric shift spaces.
Dendricity generalizes some of the remarkable combinatorial and dynamical properties of Sturmian words, such as affine factor complexity~\cite{Berthe2015}, constant number of return words~\cite{Berthe2015}, at most $d/2$ ergodic measures where $d$ is the size of the alphabet~\cite{Damron2020}, and the existence of an $S$-adic characterization~\cite{pre/Gheeraert2022}.

In this paper, we focus on the algebraic aspects of dendric shift spaces, and more specifically on the following property known as the Return Theorem: \emph{In a minimal dendric shift space, every return set forms a basis of the free group}~\cite{Berthe2015}. This result is crucial to understand various aspects of dendric shift spaces, such as their dimension groups \cite{Berthe2021}, S-adic representations~\cite{Gheeraert2022,pre/Gheeraert2022}, Schützenberger groups~\cite{Almeida2016}, and stabilizers~\cite{Berthe2018}. Our contribution is a proof that the converse of the Return Theorem also holds, leading to the following characterization.

\begin{restatable}{theorem}{characThm}\label{t:characterization}
    Let $X$ be a minimal shift space over $\cA$. The following assertions are equivalent:
    \begin{enumerate}
        \item $X$ is dendric;
        \item the sets $\ret_X(w)$, $w \in \lang(X)$, are tame bases of the free group $F_\cA$;
        \item the sets $\ret_X(w)$, $w \in \lang(X)$, are bases of the free group $F_\cA$.
    \end{enumerate}
\end{restatable}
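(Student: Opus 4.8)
The implication (ii) $\Rightarrow$ (iii) is immediate, since every tame basis is in particular a basis. I would therefore establish the cycle (i) $\Rightarrow$ (ii) $\Rightarrow$ (iii) $\Rightarrow$ (i), treating (ii) $\Rightarrow$ (iii) as trivial. All the substance lies in the refinement (i) $\Rightarrow$ (ii) of the Return Theorem and, above all, in its converse (iii) $\Rightarrow$ (i), which is the genuinely new statement.

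For (i) $\Rightarrow$ (ii) I would not invoke the Return Theorem as a black box but revisit its proof in order to track \emph{tameness}. The usual argument realizes $\ret_X(w)$ as the edge labels of a set of first-return loops based at $w$ in the Rauzy graph $\RauzyGraph_{|w|}$, and in the dendric case shows that this set freely generates $F_\cA$. To upgrade ``basis'' to ``tame basis'', I would follow the change of basis from $\cA$ to $\ret_X(w)$ along the finite chain of bispecial factors below $w$ and verify that each step is a product of elementary automorphisms $a \mapsto ba$, $a \mapsto ab$ and letter permutations. The point is that when every extension graph $\ExtGraph(v)$ is a tree, the return morphism describing one step of the underlying induction factors through these elementary generators, so that their composition is the required tame automorphism sending $\cA$ onto $\ret_X(w)$.

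For the converse (iii) $\Rightarrow$ (i) I would argue contrapositively, using the dichotomy ``not a tree'' $=$ ``disconnected or containing a cycle''. A convenient reduction is the Hopfian remark that a subset of $F_\cA$ is a basis if and only if it generates $F_\cA$ and has exactly $\card \cA$ elements; combined with the fact that a basis is in particular generating and free, this lets me detect each failure mode separately. If some $\ExtGraph(w)$ is disconnected, I would produce a factor $u$ for which $\gen{\ret_X(u)}$ is a proper subgroup of $F_\cA$ — for instance by mapping $F_\cA$ onto a nontrivial group on which all return words of $u$ act trivially, the splitting of the extension graph being what allows such a homomorphism to exist; the return set then fails to generate, hence is not a basis. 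If instead some $\ExtGraph(w)$ contains a cycle (and we may assume all extension graphs are connected, since otherwise the previous case applies), the cycle should read off as a nontrivial product of return words of a suitable factor $u$ equal to the identity, so that $\ret_X(u)$ is not free and again not a basis. In both situations the quantitative bookkeeping is provided by the Euler characteristics $\card \Ext(v) - \card \LExt(v) - \card \RExt(v) + 1$ of the extension graphs and by the first homology of the Rauzy graph, which together govern both $\card \ret_X(u)$ and the rank of $\gen{\ret_X(u)}$.

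The main obstacle is precisely this bridge from a single local defect to an explicit return set that provably fails the basis property. Two features make it delicate. First, the extension graph of $w$ does not control the return words of $w$ itself but those of the longer factors $u$ that contain $w$, so a given defect must be localized to the right $u$, likely the shortest factor in which the offending bispecial $w$ becomes ``isolated''. Second, and more seriously, neutrality $\card \Ext(v) - \card \LExt(v) - \card \RExt(v) + 1 = 0$ is strictly weaker than being a tree — a graph that is simultaneously disconnected and cyclic can be neutral — so a naive cardinality count cannot detect non-dendricity and defects at different bispecial factors may cancel numerically. Overcoming this forces the genuinely structural split between generation (connectivity) and freeness (acyclicity) sketched above, and I expect the core of the argument to be choosing $u$ and the order of Rauzy graph so that exactly one defect survives, as either a drop in the rank of $\gen{\ret_X(u)}$ below $\card \cA$ or an excess in $\card \ret_X(u)$.
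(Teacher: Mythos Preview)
Your overall architecture is sound, but for (iii) $\Rightarrow$ (i) your approach diverges from the paper's and, as you yourself flag, leaves the central step unresolved.

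The paper does \emph{not} argue contrapositively. It proves directly that hypothesis (iii) forces every $\ExtGraph_X(w)$ to be connected, and then handles acyclicity by a pure counting argument: since all return sets have cardinality $\card\cA$, the factor complexity is $(\card\cA-1)n+1$ by a result of Balkov\'a--Pelantov\'a--Steiner, so $b_X(n)=0$; combined with connectedness (no weak words), this forces every multiplicity to vanish, hence every extension graph is a tree. Thus the paper never needs to ``read a cycle off as a relation among return words'' --- the cyclic case evaporates once connectivity is established.

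The substantive gap in your proposal is the connectivity step, which you correctly identify as the obstacle but do not resolve. Your idea of constructing a homomorphism killing all return words of some $u$ from a splitting of $\ExtGraph_X(w)$ is suggestive but not a proof: you give no mechanism for choosing $u$ or building the map. The paper's missing ingredient is the \emph{derived shift space} $D_w(X)$. Passing to $D_w(X)$ turns return words of $w$ into letters, so the hypothesis that $\ret_X(w)$ is a basis makes $\theta_w\colon F_\cB\to\gen{\ret_X(w)}$ a group isomorphism; then $\gen{\ret_X(u)}=\gen{\ret_X(w)}$ for $u=\theta_w(a)w$ pulls back to $\gen{\ret_{D_w(X)}(a)}=F_\cB$, which via the Rauzy group of order $1$ forces $\ExtGraph_{D_w(X)}(\emptyw)$ to be connected. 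A separate lemma shows $\ExtGraph_X(w)$ is a graph-morphism image of $\ExtGraph_{D_w(X)}(\emptyw)$, completing the step. This derived-shift reduction is exactly the localization device you were searching for; without it your contrapositive sketch remains a plan rather than a proof.

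For (i) $\Rightarrow$ (ii), the paper simply invokes \cite[Theorem~5.19]{Berthe2015a}; your plan to reprove it by tracking tameness through the chain of bispecials is reasonable and essentially what that reference does.
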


The statement of \cref{t:characterization} is reminiscent of Vuillon's characterization of Sturmian words stating that an infinite binary word is Sturmian if and only if every factor has \emph{exactly} two return words~\cite{Vuillon2001}. Observe that our condition is stronger than Vuillon's on the number of return words. This is unavoidable as cardinality alone is not sufficient to infer dendricity on alphabets with more than 2 letters~\cite{Balkova2008}.

\cref{t:characterization} provides a natural algebraic interpretation of dendricity, which should be relevant to establish closure properties of the family of dendric shift spaces. Moreover, \cref{t:characterization} and its proof emphasize once again the interactions between symbolic dynamics, combinatorics on words, and algebra, opening the door to further research at the intersection of these topics.

\section{Preliminaries}
\label{s:pre}

We briefly recall the main notions used in this paper. For more details, we refer to the following monographs: on combinatorics on word~\cite{book/Lothaire1997}; on the free group~\cite{book/Lyndon2001}; on symbolic dynamics~\cite{book/Lind1995}.

Let $\cA$ be a finite set that we call an {\em alphabet}, and let $\cA^*$ and $F_\cA$ respectively denote the free monoid (whose elements are called \emph{words}) and the free group generated by $\cA$. We let $\emptyw$ denote the neutral element of $\cA^*$, i.e., the empty word, and we set $\cA^+ = \cA^* \setminus \{\emptyw\}$. We naturally view $\cA^*$ as a subset of $F_\cA$, and thus we may consider the subgroup of $F_\cA$ generated by a given set of words $W \subseteq \cA^*$, which is then denoted $\gen{W}$.

We let $\cA^\Z$ denote the set of two-sided infinite words equipped with the product topology of the discrete topology over $\cA$. A \emph{shift space} is a closed subset $X\subseteq\cA^\Z$ invariant under the shift map $(x_n)_{n\in\Z}\mapsto (x_{n+1})_{n\in\Z}$. The \emph{language} of a shift space $X$ is the set
\begin{equation*}
    \lang(X) = \{ x_i\cdots x_{j} \mid x\in X, \ i\leq j\}.
\end{equation*}
A shift space is \emph{minimal} (with respect to the inclusion of shift spaces) precisely when $\lang(X)$ is \emph{uniformly recurrent}, meaning that for every $u\in \lang(X)$, there exists $n$ such that $u$ is a factor of every $w\in \lang(X) \cap \cA^n$.

Given a shift space $X$ and a word $w \in \lang(X)$, the set of \emph{(left) return words} to $w$ is
\[
    \ret_X(w) = \{r \in \cA^+ \mid rw\in \lang(X)\cap w\cA^*\setminus \cA^+w\cA^+\}.
\]

\begin{example}
    Let $\cA = \{a,b,c\}$. The \emph{Tribonacci shift space} is the shift space $X$ generated by the monoid morphism $\sigma\from a \mapsto ab, b \mapsto ac, c \mapsto a$, i.e., its language is the set of factors of $\sigma^n(a)$, $n \geq 0$.
    The return words to $aba$ are the words separating consecutive occurrences of $aba$ in the elements of $X$. This can be seen in the following prefix of $\sigma^5(a)$:
    \[
        abac|aba|abac|ab|abac.
    \]
    Hence the words $ab$, $aba$, and $abac$ are return words to $aba$. As a matter of fact, they are the only return words (see~\cite[Corollary 4.5]{Justin2000} for example). Moreover, they form a basis of the free group $F_{\cA}$, since
    \begin{equation*}
        a = (ab)^{-1}(aba), \quad b = (aba)^{-1}(ab)^2, \quad c = (aba)^{-1}(abac).
    \end{equation*}
\end{example}

For a shift space $X$ and a word $w\in\lang(X)$, we consider the sets of {\em extensions}
\begin{align*}
    \LExt_X(w) &= \{ a\in \cA \mid aw\in \lang(X) \},\\
    \RExt_X(w) &= \{ b\in \cA \mid wb\in \lang(X) \},\\ 
    \Ext_X(w) &= \{(a,b) \in \cA\times \cA \mid awb\in \lang(X)\}.
\end{align*}
The \emph{extension graph} of $w\in \lang(X)$ is the bipartite graph $\ExtGraph_X(w)$ where the vertex set is the disjoint union of $\LExt_X(w)$ and $\RExt_X(w)$, and there is an edge between $a\in\LExt_X(w)$ and $b\in\RExt_X(w)$ if $(a,b)\in \Ext_X(w)$.

A word $w\in\lang(X)$ is said to be \emph{dendric} if $\ExtGraph_X(w)$ is a tree. We naturally extend this terminology to a shift space $X$ when it is true for all $w \in \lang(X)$. 

\begin{example}
Continuing with the example of the Tribonacci shift space $X$, the extension graph $\ExtGraph_X(a)$ is given in below.
\begin{center}
    \begin{tikzpicture}
    \node[draw,circle] (ra) at (1,0) {$a$};
    \node[draw,circle] (rb) at (1, 1) {$b$};
    \node[draw,circle] (rc) at (1, -1) {$c$};
    
    \node[draw,circle] (la) at (-1,1) {$a$};
    \node[draw,circle] (lb) at (-1, -1) {$b$};
    \node[draw,circle] (lc) at (-1, 0) {$c$};
    
    \draw[-] (la) to node{} (rb);
    \draw[-] (lb) to node{} (ra);
    \draw[-] (lb) to node{} (rb);
    \draw[-] (lb) to node{} (rc);
    \draw[-] (lc) to node{} (rb);
    \end{tikzpicture}
\end{center}
Again, the factors $aab, baa,bab,bac$, and $cab$ can be seen in $abacabaabacababac$. Observe that $\ExtGraph_X(a)$ is a tree, so $a$ is dendric. More generally, the Tribonacci shift space is episturmian, and therefore, dendric~\cite{Berthe2015}.
\end{example}

\section{Proof of the main result}
\label{s:main}

The proof of the converse of the Return Theorem uses techniques and results from symbolic dynamics, group theory, and combinatorics. We start with some results at the intersection between symbolic dynamics and combinatorics.

The first ingredient needed for the proof is the notion of \emph{derived shift space}, which was introduced in \cite{Durand1998} to characterize substitutive sequences. Let $X$ be a minimal shift space and $w\in\lang(X)$. Let $\cB$ be an alphabet with a bijection $\theta_w\from\cB\to \ret_X(w)$, which we naturally extend to maps $\theta_w\from F_\cB \to F_\cA$ and $\theta_w\from\cB^\Z\to A^\Z$ whenever convenient. The \emph{derived shift space} of $X$ with respect to $w$ is the shift space
\begin{equation*}
    D_w(X) = \theta_w^{-1}(X).
\end{equation*}
We call $\theta_w$ the \emph{derivating substitution}. Note that the operation $X\mapsto D_w(X)$ only depends on $w$, up to a relabeling of the alphabet of $D_w(X)$. It is direct to check that $D_w(X)$ is also minimal. Moreover, the derived shift space of a dendric shift space is again dendric by~\cite[Theorem 5.13]{Berthe2015a}. Using ideas from the proof of this result, we obtain the following technical lemma.

\begin{lemma}\label{L:isomorphism generalized extension graph}
    Let $X$ be a minimal shift space and $w \in \lang(X)$. Then $\ExtGraph_X(w)$ is the image of $\ExtGraph_{D_w(X)}(\varepsilon)$ under a graph morphism.
\end{lemma}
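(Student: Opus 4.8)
The plan is to exhibit an explicit surjective graph morphism $\phi\from \ExtGraph_{D_w(X)}(\emptyw)\to\ExtGraph_X(w)$. First I would unwind what the source graph looks like: writing $Y=D_w(X)$ and using that $Y$ is minimal, every letter of $\cB$ occurs in $Y$, so both vertex classes $\LExt_Y(\emptyw)$ and $\RExt_Y(\emptyw)$ coincide with $\cB$, and there is an edge between a left copy of $a$ and a right copy of $b$ exactly when $ab\in\lang(Y)$. By the defining property of the derived shift, $ab\in\lang(Y)$ means that the return words $r_a=\theta_w(a)$ and $r_b=\theta_w(b)$ appear consecutively in the return-word decomposition of some point of $X$; equivalently $r_ar_bw\in\lang(X)$, with occurrences of $w$ at positions $0$, $|r_a|$ and $|r_a|+|r_b|$.

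I would then read off the extensions of the central occurrence of $w$ (the one at position $|r_a|$) directly from the two return words. Since every return word $r$ satisfies $rw\in w\cA^*$, I can write $rw=wu_r$ with $|u_r|=|r|\ge 1$; the central $w$ is then immediately followed by the first letter of $u_{r_b}$ and immediately preceded by the last letter of $r_a$. This suggests defining $\phi$ on vertices by sending the left copy of $a$ to the last letter of $\theta_w(a)$, and the right copy of $b$ to the first letter of $u_{\theta_w(b)}$; both are well defined (they depend only on $a$, resp.\ $b$, and on the fixed $w$) and land in $\LExt_X(w)$, resp.\ $\RExt_X(w)$, because $w$ followed by the first letter of $u_{r_b}$ is a prefix of $r_bw\in\lang(X)$. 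The factor $r_ar_bw$ then shows that the image of an edge $ab$ is a genuine edge of $\ExtGraph_X(w)$, namely the pair $(\,\text{last letter of }r_a,\ \text{first letter of }u_{r_b})$, so $\phi$ is a graph morphism.

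Finally I would check surjectivity, which yields that the image is exactly $\ExtGraph_X(w)$. Given any edge $(\alpha,\beta)$ of $\ExtGraph_X(w)$, i.e.\ $\alpha w\beta\in\lang(X)$, uniform recurrence lets me realize this inside a two-sided point of $X$ in which the relevant occurrence of $w$ is flanked on both sides by full return words $r_a$ (ending at $\alpha$) and $r_b$ (with $u_{r_b}$ starting at $\beta$); then $ab\in\lang(Y)$ and $\phi$ sends this edge to $(\alpha,\beta)$. Since every vertex of $\ExtGraph_X(w)$ is incident to at least one edge, again by recurrence, vertex-surjectivity follows from edge-surjectivity. I expect the main obstacle to be the bookkeeping of overlaps when a return word is shorter than $w$: the letter immediately after the central $w$ need not lie inside $r_b$ but inside the following copy of $w$, so one must verify that $\phi$ is well defined and incidence-preserving in all cases. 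The cleanest route should be to phrase everything through $u_r$ rather than $r$ directly, which sidesteps the case distinction on $|r|$ versus $|w|$.
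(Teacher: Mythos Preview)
Your proposal is correct and follows essentially the same route as the paper: the map sending a left copy of $a$ to the last letter of $\theta_w(a)$ and a right copy of $b$ to the first letter of $u_{\theta_w(b)}$ coincides with the paper's map $\Theta$, since your $u_r=w^{-1}rw$ is exactly the right return word $\theta'_w(\theta_w^{-1}(r))$ used there. The verification that $\Theta$ is a graph morphism and the surjectivity argument via uniform recurrence are also the same, and your observation that working with $u_r$ avoids any case distinction on $|r|$ versus $|w|$ is precisely why the paper introduces $\theta'_w$.
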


\begin{proof}
    Let $\theta_w\from \cB\to\ret_X(w)$ be the derivating substitution used to define $D_w(X)$. We consider the set 
    \[
        \ret'_X(w) = \{r' \in \cA^+ \mid wr'\in \lang(X)\cap \cA^*w\setminus \cA^+w\cA^+\}.
    \]
    of \emph{right return words} to $w$ and we consider $\theta'_w\from \cB^* \to \cA^*$ defined by $\theta'_w(u) = w^{-1}\theta_w(u)w$ (thus it is a bijection between $\cB$ and $\ret_X'(w)$). Observe that $z\in\lang(D_w(X))$ if and only if $\theta_w(z)w\in\lang(X)$, if and only if $w\theta_w'(z)\in\lang(X)$.
    
    Consider the map $\Theta$ defined on the vertices of $\ExtGraph_{D_w(X)}(\varepsilon)$ as follows:
    \begin{itemize}
        \item $r\in\LExt_{D_w(X)}(\varepsilon)$ is mapped to the last letter of $\theta_w(r)$;
        \item $s\in\RExt_{D_w(X)}(\varepsilon)$ is mapped to the first letter of $\theta_w'(s)$.
    \end{itemize}
    Assume that $r\in \LExt_{D_w(X)}(\emptyw)$ and $s\in \RExt_{D_w(X)}(\emptyw)$ are connected by an edge in $\ExtGraph_{D_w(X)}(\emptyw)$; in other words, $rs\in\lang(D_w(X))$. It then follows that $\theta_w(rs)w = \theta_w(r)w\theta_w'(s)\in\lang(X)$, and thus $\Theta(r)w\Theta(s)\in\lang(X)$. This shows that $\Theta$ defines a graph morphism from $\ExtGraph_{D_w(X)}(\emptyw)$ to a subgraph of $\ExtGraph_X(w)$. 
    
    To show that it is onto, assume that $a\in\LExt_X(w)$ and $b\in\RExt_X(w)$ are connected by an edge in $\ExtGraph_{X}(w)$; in other words $awb\in\lang(X)$. By uniform recurrence, there exist words $u$ and $v$ such that $uw$ starts with $w$ and ends with $aw$, $wv$ starts with $wb$ and ends with $w$, and $uwv\in\lang(X)$. Assuming that $u$ and $v$ are of minimal lengths, we get $u\in\ret_X(w)$ and $v\in\ret_X'(w)$, hence there exist $r,s\in\cB$ such that $\theta_w(r)=u$ and $\theta_w'(s)=v$. Moreover, $\theta_w(r)w\theta_w'(s) = \theta_w(rs)w \in \lang(X)$ so $rs \in \lang(D_w(X))$. Considering $r$ as an element of $\LExt_{D_w(X)}(\emptyw)$ and $s$ as an element of $\RExt_{D_w(X)}(\emptyw)$, $r$ and $s$ are connected in $\ExtGraph_{D_w(X)}(\emptyw)$ and satisfy $\Theta(r)=a$ and $\Theta(s)=b$, which proves that $\Theta$ is onto.
\end{proof}

We will also need the following result.

\begin{lemma}[{\cite[Proposition 2.6]{Durand1998}}]
\label{l:durand derivation en chaine}
    Let $X$ be a minimal shift space. If $\theta_w$ is the derivating substitution for $w \in \lang(X)$ used to define $D_w(X)$, then for every $u \in \lang(D_w(X))$
    \begin{equation*}
        \theta_w(\ret_{D_w(X)}(u)) = \ret_X(\theta_w(u)w).
    \end{equation*}
\end{lemma}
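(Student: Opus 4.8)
The plan is to prove the sharper statement that $t\mapsto \theta_w(t)$ restricts to a bijection from $\ret_{D_w(X)}(u)$ onto $\ret_X(\theta_w(u)w)$, from which the asserted set equality is immediate. Throughout I would write $W=\theta_w(u)w$ and use the observation recorded in the proof of \cref{L:isomorphism generalized extension graph} that $v\in\lang(D_w(X))$ if and only if $\theta_w(v)w\in\lang(X)$.

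The first step, and the technical heart of the argument, is a \emph{synchronization} property: for every $v=v_1\cdots v_k\in\cB^*$ (with $v_i\in\cB$), the occurrences of $w$ in $\theta_w(v)w$ are \emph{exactly} those at the block boundaries, i.e.\ at the positions $|\theta_w(v_1\cdots v_i)|$ for $0\le i\le k$. One inclusion is easy: since each $r\in\ret_X(w)$ satisfies $rw\in w\cA^*$, an induction on $k$ shows that $\theta_w(v)w$ has an occurrence of $w$ at every block boundary, and in particular $\theta_w(s)w$ starts with $w$ for every $s\in\cB^*$. The reverse inclusion — that no \emph{spurious} occurrence of $w$ arises — is where minimality enters: a factor extends to a point $x\in X$, whose canonical decomposition along its occurrences of $w$ produces an element of $D_w(X)$; comparing this decomposition with the block structure of $\theta_w(v)$, and invoking that each $r_iw$ lies outside $\cA^+w\cA^+$, rules out occurrences of $w$ internal to, or straddling, the blocks.

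With synchronization in hand I would establish the two inclusions. For $\subseteq$, take $t\in\ret_{D_w(X)}(u)$. Then $tu\in\lang(D_w(X))$ yields $\theta_w(t)W=\theta_w(tu)w\in\lang(X)$; writing $tu=us$ and using that $\theta_w(s)w$ starts with $w$ shows that $\theta_w(t)W$ starts with $W$; and an internal occurrence of $W$ in $\theta_w(t)W$ would, by synchronization, lie at a block boundary and hence force an internal occurrence of $u$ in $tu$, contradicting $tu\notin\cB^+u\cB^+$. Thus $\theta_w(t)\in\ret_X(W)$. For $\supseteq$, take $r\in\ret_X(W)$. As $rW$ begins and ends with $w$, synchronization identifies the occurrences of $w$ strictly inside $rW$ preceding the final copy of $W$, and they cut $r$ into consecutive return words to $w$, so $r=\theta_w(t)$ for a unique $t\in\cB^+$. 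Then $\theta_w(tu)w=rW\in\lang(X)$ gives $tu\in\lang(D_w(X))$; it starts with $u$ by uniqueness of the block decomposition; and any internal occurrence of $u$ in $tu$ would yield an internal occurrence of $W$ in $rW$, which is excluded. Hence $t\in\ret_{D_w(X)}(u)$.

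The hard part will be the reverse inclusion in the synchronization step: controlling occurrences of $w$ that could straddle two consecutive blocks or overlap a boundary. The definition of return word directly forbids occurrences internal to a single $r_iw$, but occurrences spanning a boundary require the global structure of points of $X$ (via minimality and uniform recurrence), together with the fact that consecutive occurrences of $w$ in any $x\in X$ are separated by elements of $\ret_X(w)$. Once this is settled, the two inclusions above reduce to routine bookkeeping with the definitions.
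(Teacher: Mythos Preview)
The paper gives no proof of this lemma: it is quoted verbatim as \cite[Proposition~2.6]{Durand1998} and used as a black box. Your outline is a correct reconstruction of the standard argument, so there is nothing to compare against on the paper's side.

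One comment on the part you flag as hard. The synchronization step does \emph{not} require minimality, nor any global structure of points of $X$; it is a purely local fact about concatenations of return words. Suppose $v=v_1\cdots v_k\in\cB^*$ and $w$ occurs in $\theta_w(v)w$ at a position $p$ with $|\theta_w(v_1\cdots v_{i-1})|<p<|\theta_w(v_1\cdots v_i)|$. Since $\theta_w(v_{i+1}\cdots v_k)w$ begins with $w$, the length-$(|\theta_w(v_i)|+|w|)$ factor of $\theta_w(v)w$ starting at position $|\theta_w(v_1\cdots v_{i-1})|$ is exactly $\theta_w(v_i)\,w$; and since $p+|w|\le |\theta_w(v_1\cdots v_i)|+|w|$, the offending occurrence of $w$ lies entirely inside this factor, at an interior position. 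That contradicts $\theta_w(v_i)w\notin\cA^+w\cA^+$. So boundary-straddling is impossible for free, and the ``technical heart'' collapses to a two-line computation. Minimality is needed only to guarantee that $D_w(X)$ and the bijection $\theta_w\from\cB\to\ret_X(w)$ are well defined in the first place. With that simplification, both inclusions in your last two paragraphs go through exactly as you wrote them.
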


\cref{P:free implies connected} below is the key to prove the converse of the Return Theorem. It relies on the previous lemmas and uses the notion of Rauzy graphs. The \emph{order-$m$ Rauzy graph} of a shift space $X$, denoted $\RauzyGraph_m(X)$, is the directed graph whose vertices are the elements of $\lang_m(X)$, and there is an edge from $u$ to $v$ if there are letters $a$ and $b$ such that $ub = av \in \lang(X)$; this edge is labeled by $a$. The label of a path is the concatenation of the labels of the edges. 

\begin{proposition}\label{P:free implies connected}
    Let $X$ be a minimal shift space and $w \in \lang(X)$. If the set $\ret_X(w)$ is a basis of $\gen{\ret_X(w)}$, and if there exists $u \in \ret_X(w)w$ for which $\gen{\ret_X(u)} = \gen{\ret_X(w)}$, then $\ExtGraph_X(w)$ is connected.
\end{proposition}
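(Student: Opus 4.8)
The plan is to prove the contrapositive of a reduction to the derived shift space. First I would pass to $Y = D_w(X)$ and reduce the statement to the connectedness of $\ExtGraph_Y(\emptyw)$. Indeed, by \cref{L:isomorphism generalized extension graph}, $\ExtGraph_X(w)$ is the (surjective) image of $\ExtGraph_Y(\emptyw)$ under a graph morphism, and graph morphisms send connected graphs to connected graphs; hence it suffices to prove that $\ExtGraph_Y(\emptyw)$ is connected. Next I would translate the two hypotheses into a single statement about $Y$. Writing $\theta_w\from\cB\to\ret_X(w)$ for the derivating substitution and extending it to $\theta_w\from F_\cB\to F_\cA$, the first hypothesis says exactly that $\theta_w$ is injective. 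Writing the element of the second hypothesis as $u=\theta_w(c)w$ with $c\in\cB$, \cref{l:durand derivation en chaine} gives $\theta_w(\ret_Y(c))=\ret_X(u)$, so that $\theta_w(\gen{\ret_Y(c)})=\gen{\ret_X(u)}=\gen{\ret_X(w)}=\theta_w(F_\cB)$; since $\theta_w$ is injective this yields $\gen{\ret_Y(c)}=F_\cB$. The proposition thus reduces to the following: if $Y$ is a minimal shift over $\cB$ and $c\in\cB$ satisfies $\gen{\ret_Y(c)}=F_\cB$, then $\ExtGraph_Y(\emptyw)$ is connected.

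I would prove this reduced claim by contraposition, so assume $\ExtGraph_Y(\emptyw)$ has connected components $K_1,\dots,K_p$ with $p\geq 2$. By minimality every letter occurs as both a left and a right extension of $\emptyw$, so each letter $a$ determines an index $\lambda(a)$ (the component containing $a$ as a left vertex) and an index $\rho(a)$ (the component containing $a$ as a right vertex). The defining adjacency of $\ExtGraph_Y(\emptyw)$ is that $ab\in\lang(Y)$ forces the left vertex $a$ and the right vertex $b$ into the same component, i.e.\ $\lambda(a)=\rho(b)$. I would record this in an auxiliary directed graph on the vertex set $\{1,\dots,p\}$ — a collapsed order-$1$ Rauzy graph of $Y$ — with one edge per letter $a$, going from $\rho(a)$ to $\lambda(a)$. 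The adjacency relation then says precisely that reading a word of $\lang(Y)$ traces a walk in this graph, and minimality (which makes $\RauzyGraph_1(Y)$ strongly connected) implies this auxiliary graph is connected.

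The key observation is that every return word to $c$ traces a \emph{closed} walk. If $r=cy_1\cdots y_k\in\ret_Y(c)$ then $rc=cy_1\cdots y_kc\in\lang(Y)$, and applying the adjacency relation to each of its length-two factors shows that the walk labelled by $r$ starts and ends at the vertex $\rho(c)$. I would then build a homomorphism $\psi\from F_\cB\to\Z$ that vanishes on all return words yet is nontrivial, contradicting $\gen{\ret_Y(c)}=F_\cB$. Concretely, fix any non-constant potential $h\from\{1,\dots,p\}\to\Z$ and set $\psi(a)=h(\lambda(a))-h(\rho(a))$ for each letter $a$. For a return word $r$ the value $\psi(r)$ telescopes along the closed walk to $h(\rho(c))-h(\rho(c))=0$, so $\gen{\ret_Y(c)}\subseteq\ker\psi$. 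Since $p\geq 2$ and the auxiliary graph is connected, some edge has distinct endpoints under $h$, so $\psi\neq 0$ and $\ker\psi\neq F_\cB$, the desired contradiction.

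The main obstacle is the core reduced statement, and within it the correct bookkeeping: defining the two colourings $\lambda,\rho$ on letters and verifying that the adjacency relation of $\ExtGraph_Y(\emptyw)$ turns words into walks and return words into closed walks based at $\rho(c)$. Once this is set up, the telescoping potential argument is routine, but it hinges on two points to be checked carefully — that the auxiliary graph is connected (so that having $p\geq 2$ components is actually detected by a non-constant potential) and that the graph morphism of \cref{L:isomorphism generalized extension graph} is onto (so that connectedness transfers back to $\ExtGraph_X(w)$).
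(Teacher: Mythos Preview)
Your reduction to the derived shift space is identical to the paper's: both invoke \cref{L:isomorphism generalized extension graph} to transfer connectedness from $\ExtGraph_{D_w(X)}(\emptyw)$ to $\ExtGraph_X(w)$, and both use \cref{l:durand derivation en chaine} together with the freeness hypothesis to deduce $\gen{\ret_{D_w(X)}(c)}=F_\cB$ for the letter $c$ with $\theta_w(c)w=u$.

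Where you diverge is in the proof of the reduced claim. The paper observes that $\gen{\ret_{D_w(X)}(c)}$ sits inside the Rauzy group $H$ (the subgroup generated by labels of closed walks at $c$ in $\RauzyGraph_1(D_w(X))$), concludes $H=F_\cB$, and then invokes an external result \cite[Lemma~7.2]{GouletOuellet2022} stating that this forces $\ExtGraph_{D_w(X)}(\emptyw)$ to be connected. You instead give a direct, self-contained argument for the contrapositive: from a non-trivial partition of $\ExtGraph_{D_w(X)}(\emptyw)$ into components you manufacture a surjection $\psi\from F_\cB\to\Z$ that kills every return word, via the telescoping potential on the collapsed Rauzy graph. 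This is correct (the connectedness of your auxiliary graph follows from strong connectedness of $\RauzyGraph_1$, as you note, and guarantees $\psi\neq 0$), and it has the advantage of being elementary and avoiding the external citation; in effect you are reproving the relevant special case of that lemma. The paper's route is shorter on the page but less transparent; yours exposes the mechanism.
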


\begin{proof}
    Let $\theta_w\from \cB \to \ret_X(w)$ be a derivating substitution for $w$ and let $a\in\cB$ be such that $\theta_w(a)w = u$. By \cref{l:durand derivation en chaine}, we have $\gen{\theta_w(\ret_{D_w(X)}(a))} = \gen{\ret_X(u)} = \gen{\ret_X(w)}$. Since $\ret_X(w)$ is a basis of $\gen{\ret_X(w)}$, one can naturally extend $\theta_w$ into a group isomorphism from $F_\cB$ to $\gen{\ret_X(w)}$. It follows that
    \begin{equation*}
        \gen{\ret_{D_w(X)}(a)} = \theta_w^{-1}\gen{\theta_w (\ret_{D_w(X)}(a))} = F_\cB.
    \end{equation*}
    
    Let us show that $\ExtGraph_X(w)$ is connected. Let $H$ be the subgroup of $F_\cB$ generated by the labels of the loops based at the vertex $a$ in $\RauzyGraph_1(D_w(X))$ (this is sometimes called the \emph{Rauzy group} with respect to $a$, detailed definitions may be found in \cite[Section 4]{GouletOuellet2022}). As return words are particular examples of labels of loops, $\gen{\ret_{D_w(X)}(a)}$ is a subgroup of $H$. Hence, $H = F_\cB$. By~\cite[Lemma 7.2]{GouletOuellet2022}, this implies that $\ExtGraph_{D_w(X)}(\emptyw)$ is connected. As, by \cref{L:isomorphism generalized extension graph}, $\ExtGraph_X(w)$ is the image of $\ExtGraph_{D_w(X)}(\emptyw)$ under a graph morphism, this shows that $\ExtGraph_X(w)$ is connected.
\end{proof}

To prove the main result, we need one last combinatorial ingredient: The \emph{multiplicity} (or \emph{bilateral order}~\cite{Balkova2008,Cassaigne1997}) of a word $w\in\lang(X)$ is the quantity
\begin{equation*}
    m_X(w) = \card \Ext_X(w) - \card \LExt_X(w) - \card \RExt_X(w) + 1.
\end{equation*}
A word is \emph{strong} if $m_X(w)>0$, \emph{neutral} if $m_X(w)=0$, and \emph{weak} if $m_X(w)<0$.

Multiplicity is related to the \emph{factor complexity} $p_X(n) = \card \left(\lang(X) \cap \cA^n\right)$ as follows (see \cite[Proposition 3.5]{Cassaigne1997}): If $s_X(n) = p_X(n+1)-p_X(n)$ and $b_X(n) = s_X(n+1)-s_X(n)$, then
\begin{equation}\label{Eq:second diff of complexity}
    b_X(n) = \sum_{w\in \lang(X)\cap \cA^n}m_X(w).
\end{equation}

Multiplicity is also related to dendricity. Indeed, if we say that a word $w \in \lang(X)$ is \emph{connected} whenever the graph $\ExtGraph_X(w)$ is connected, then we have the following simple observations.

\begin{lemma}\label{L:connected and neutral}
    Let $X$ be a shift space and $w \in \lang(X)$.
    \begin{enumerate}
        \item If $w$ is connected, then $w$ is strong or neutral.
        \item If $w$ is connected and neutral, then $w$ is dendric.
    \end{enumerate}
\end{lemma}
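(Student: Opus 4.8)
The plan is to translate both assertions into elementary facts about the bipartite graph $\ExtGraph_X(w)$. First I would record the two quantities that govern its structure: its vertex set is the disjoint union $\LExt_X(w) \sqcup \RExt_X(w)$, so it has $V := \card\LExt_X(w) + \card\RExt_X(w)$ vertices, while by definition its edges are in bijection with $\Ext_X(w)$, so it has $E := \card\Ext_X(w)$ edges. With these identifications, the multiplicity formula becomes
\[
    m_X(w) = \card\Ext_X(w) - \card\LExt_X(w) - \card\RExt_X(w) + 1 = E - V + 1,
\]
so that $m_X(w)$ is exactly the cyclomatic quantity $E - V + 1$ of the extension graph. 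Since $w\in\lang(X)$ appears in a two-sided infinite word, both $\LExt_X(w)$ and $\RExt_X(w)$ are nonempty, so $\ExtGraph_X(w)$ is a nonempty graph and the standard graph-theoretic facts below apply.

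For part (i), I would invoke the elementary fact that any connected graph on $V$ vertices has at least $V-1$ edges. Hence, if $w$ is connected, then $E \geq V-1$, which gives $m_X(w) = E - V + 1 \geq 0$; that is, $w$ is strong or neutral. For part (ii), I would use the complementary characterization: a connected graph is a tree if and only if $E = V - 1$. If $w$ is connected and neutral, then $m_X(w) = 0$ forces $E = V - 1$, so $\ExtGraph_X(w)$ is a connected graph with exactly $V-1$ edges, hence a tree. By the definition of dendricity, this means precisely that $w$ is dendric.

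Both parts thus reduce to textbook inequalities relating the numbers of edges and vertices in a connected graph, so I do not anticipate any genuine obstacle. The only point requiring care is the bookkeeping that matches the multiplicity formula to $E - V + 1$, together with the observation that $\ExtGraph_X(w)$ is nonempty, which ensures the cited graph-theoretic facts are applicable without degenerate exceptions.
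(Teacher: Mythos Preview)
Your argument is correct and matches the paper's own proof essentially line for line: both identify $m_X(w)=E-V+1$ for the extension graph and then invoke the standard facts that a connected graph satisfies $E\geq V-1$, with equality precisely for trees. Your only addition is the explicit remark that $\ExtGraph_X(w)$ is nonempty, which is a harmless extra justification.
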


\begin{proof}
    This directly follows from the observation that $m_X(w) - 1$ is the difference between the number of edges and the number of vertices in $\ExtGraph_X(w)$. By a classical graph theory result (see \cite[Exercise 2.1.5 and Corollary 2.4.2]{book/Bondy1976} for example), if $w$ is connected then $m_X(w) \geq 0$, with equality if and only if $\ExtGraph_X(w)$ is a tree.
\end{proof}

We can now prove the converse of the Return Theorem.

\begin{theorem}\label{t:main}
    Let $X$ be a minimal shift space over $\cA$ such that, for all $w\in \lang(X)$, $\ret_X(w)$ is a basis of the free group $F_\cA$. Then $X$ is dendric.
\end{theorem}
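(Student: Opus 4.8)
The plan is to prove that every $w \in \lang(X)$ is dendric by showing it is connected and neutral, then invoking part~(ii) of \cref{L:connected and neutral}. The key tool is \cref{P:free implies connected}, but to apply it I need two things for each $w$: that $\ret_X(w)$ is a basis of $\gen{\ret_X(w)}$, and that there exists $u \in \ret_X(w)w$ with $\gen{\ret_X(u)} = \gen{\ret_X(w)}$. The first requirement is immediate from the hypothesis, since $\ret_X(w)$ is assumed to be a basis of the whole free group $F_\cA$, and a fortiori a basis of the subgroup it generates (which is in fact all of $F_\cA$). So the crux is to establish the second requirement.

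The plan for the second requirement is to exploit that the hypothesis applies uniformly to \emph{all} words in $\lang(X)$. Every $u$ of the form $rw$ with $r \in \ret_X(w)$ lies in $\lang(X)$, so $\ret_X(u)$ is also a basis of $F_\cA$ by hypothesis; in particular $\gen{\ret_X(u)} = F_\cA = \gen{\ret_X(w)}$. Thus the condition $\gen{\ret_X(u)} = \gen{\ret_X(w)}$ holds automatically for \emph{every} such $u$, and we may pick any one of them. Consequently \cref{P:free implies connected} applies and every $w \in \lang(X)$ is connected.

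It remains to upgrade ``connected'' to ``dendric.'' By part~(i) of \cref{L:connected and neutral}, connectedness of every $w$ already forces $m_X(w) \geq 0$ for all $w$, so by \cref{Eq:second diff of complexity} we get $b_X(n) = \sum_{w \in \lang(X) \cap \cA^n} m_X(w) \geq 0$ for every $n$, meaning $s_X(n)$ is non-decreasing. The plan is to argue that in fact $m_X(w) = 0$ for every $w$, i.e., every connected word is neutral, after which part~(ii) of \cref{L:connected and neutral} gives dendricity directly. The natural approach is a counting/summation argument: since $\ret_X(w)$ is a basis of $F_\cA$, its cardinality is forced to equal $\card \cA = d$ (the rank of a free group is a well-defined invariant), so the number of return words is constant equal to $d$ across all $w$. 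This constancy of the return-word count should pin down the complexity growth $s_X(n)$ and force $b_X(n) = 0$ for all $n$, whence each nonnegative summand $m_X(w)$ must vanish.

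The main obstacle I anticipate is making the final step rigorous: going from ``$\card \ret_X(w) = d$ for all $w$'' and ``$m_X(w) \geq 0$ for all $w$'' to ``$m_X(w) = 0$ for all $w$.'' The subtlety is relating the constant number of return words to the second difference $b_X(n)$ of the complexity. One route is to recall that for a minimal aperiodic shift, $\card \ret_X(w)$ equals the number of edges minus vertices plus one in an appropriate graph, or more directly to use that a constant number $d$ of return words forces the factor complexity to be eventually affine with slope $d-1$, so that $s_X(n) = d-1$ eventually and hence $b_X(n) = 0$ eventually; combined with monotonicity of $s_X(n)$ and the fact that $s_X(n) \geq d-1$ throughout, one deduces $b_X(n) = 0$ for all $n$, forcing every $m_X(w) = 0$. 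The delicate point is handling small $n$ and confirming there is no initial ``deficit'' in complexity that the nonnegativity of later terms cannot repair; I would address this by carefully tracking the partial sums $\sum_{k \leq n} b_X(k) = s_X(n+1) - s_X(1)$ and using the known value of the constant return count to anchor the telescoping.
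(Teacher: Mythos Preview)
Your proposal is correct and follows essentially the same approach as the paper. The step you flag as delicate---deducing $b_X(n)=0$ from the constant return-word count---is handled in the paper by directly invoking \cite[Theorem~4.5]{Balkova2008}, which gives $p_X(n)=(\card\cA-1)n+1$ for \emph{all} $n$ from the hypothesis $\card\ret_X(w)=\card\cA$; this yields $s_X(n)=\card\cA-1$ and $b_X(n)=0$ outright, so your monotonicity-plus-eventual-affinity detour and the worry about small $n$ become unnecessary.
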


\begin{proof}
    We first observe that any $w \in \lang(X)$ is connected by \cref{P:free implies connected}. In particular, $\lang(X)$ has no weak factor by \cref{L:connected and neutral}. Moreover, all the return sets have the same cardinality $\card\cA$ by assumption, therefore, the factor complexity of $X$ is given by $p_X(n) = (\card\cA - 1)n + 1$ for all $n$ by \cite[Theorem~4.5]{Balkova2008}. It follows that $s_X(n) = \#\cA-1$ and $b_X(n) = 0$. Using again the fact that $X$ has no weak factor, this implies, by \cref{Eq:second diff of complexity}, that $m_X(w) = 0$ for all $w \in \lang(X)$, and so $X$ is dendric by \cref{L:connected and neutral}.
\end{proof}

Using \cite[Theorem~5.19]{Berthe2015a}, we can strengthen the Return Theorem by observing that, for any factor $w$ in a dendric shift space $X$, $\ret_X(w)$ is a tame basis of $F_\cA$. A basis is said to be {\em tame} if it is obtained by applying to $\cA$ a composition of permutations and morphisms of the form $\alpha_{a,b}$ and $\tilde{\alpha}_{a,b}$ where, for $a \ne b$ in $\cA$,
\[
    \alpha_{a,b}(c) =
    \begin{cases}
        ab, & \text{ if  $c=a$;} \\
        c, & \text{ otherwise}
    \end{cases}
    \quad
    \text{and}
    \quad
    \tilde{\alpha}_{a,b}(c) =
    \begin{cases}
        ba, & \text{ if $c=a$;} \\
        c, & \text{ otherwise.}
    \end{cases}
\]

The following theorem synthesizes \cref{t:main} together with \cite[Theorem~5.19]{Berthe2015a} and the Return Theorem.

\characThm*

\bibliographystyle{abbrvnat}
\bibliography{biblio}

\end{document}